\documentclass[letterpaper,twocolumn,10pt]{article}
\usepackage{usenix-2020-09}
\usepackage{listings}
\microtypecontext{spacing=nonfrench}
\usepackage{tikz}
\usetikzlibrary{shapes,arrows.meta, positioning,fit}
\usepackage{float}
\usepackage{amsmath, amssymb, amsthm}
\usepackage{caption}
\usepackage[linesnumbered, boxed, ruled]{algorithm2e}
\usepackage{indentfirst}
\usepackage{graphicx}
\usepackage{float}
\usepackage{booktabs}
\graphicspath{ {./images/} }

\newcommand{\func}[1]{\ensuremath{\mathcal{F}_{#1}}}
\newcommand{\prot}[1]{\ensuremath{\Pi_{#1}}}
\newcommand{\adv}{\ensuremath{\mathcal{A}\,\,}}
\newcommand{\simul}{\ensuremath{\mathcal{S}\,\,}}
\newcommand{\env}{\ensuremath{\mathcal{Z}}}
\newcommand{\leak}{\ensuremath{\mathcal{L}}}
\newcommand{\proc}[1]{\text{\normalfont\scshape{#1}}}

\SetAlCapSkip{1em}

\begin{document}

\date{}

\title{\Large \bf FHE-SQL: Fully Homomorphic Encrypted SQL Database}

\author{
    {\rm Po-Yu Tseng}\\
    National Taiwan University
    \and
    {\rm Po-Chu Hsu}\\
    FHEngine.com
    \and
    {\rm Shih-Wei Liao}\\
    National Taiwan University
}

\maketitle
\def\KeyGen{\mathsf{KeyGen}}
\def\Encrypt{\mathsf{Encrypt}}
\def\Decrypt{\mathsf{Decrypt}}
\def\Eval{\mathsf{Eval}}
\section*{Abstract}

FHE-SQL is a privacy-preserving database system that enables secure query processing on encrypted data using Fully Homomorphic Encryption (FHE), providing privacy guaranties where an untrusted server can execute encrypted queries without learning either the query contents or the underlying data. Unlike property-preserving encryption-based systems such as CryptDB, which rely on deterministic or order-preserving encryption and are vulnerable to frequency, order, and equality-pattern inference attacks, FHE-SQL performs computations entirely under encryption, eliminating these leakage channels. Compared to trusted-hardware approaches such as TrustedDB, which depend on a hardware security module and thus inherit its trust and side-channel limitations, our design achieves end-to-end cryptographic protection without requiring trusted execution environments. In contrast to high-performance FHE-based engines-Hermes, which target specialized workloads such as vector search, FHE-SQL supports general SQL query semantics with schema-aware, type-safe definitions suitable for relational data management. FHE-SQL mitigates the high cost of ciphertext space by using an indirection architecture that separates metadata in RocksDB from large ciphertexts in blob storage. It supports oblivious selection via homomorphic boolean masks, multi-tier caching, and garbage collection, with security proven under the Universal Composability framework.

\vspace{0.2em}
\noindent\textbf{Keywords:} Fully Homomorphic Encryption, Private Information Retrieval, Encrypted Databases, Privacy-Preserving Systems
\section{Introduction}
\label{sec:introduction}

The proliferation of cloud computing has created an unprecedented demand for outsourcing data storage and processing to untrusted third-party servers. Organizations increasingly rely on cloud database services for their scalability, cost-effectiveness, and maintenance benefits. However, this trend introduces significant privacy concerns: sensitive data must be stored and processed on servers controlled by potentially adversarial entities. Traditional approaches to this problem have relied on property-preserving encryption schemes such as deterministic encryption\cite{GenericAttacksonSecureOutsourcedDatabases}, order-preserving encryption, and format-preserving encryption. Systems such as CryptDB\cite{CryptDB} pioneered the use of these techniques to enable SQL queries over encrypted data. However, these approaches have fundamental security limitations: they leak substantial information about the underlying data through access patterns, query equality patterns, and statistical properties.

Recent cryptanalytic attacks\cite{PPE_Attacks} have shown that property-preserving encryption schemes can be completely broken in many practical scenarios. Deterministic encryption reveals which plaintexts are equal, order-preserving encryption leaks the relative ordering of all encrypted values, and searchable encryption exposes search patterns that can be exploited to recover plaintext data. We propose FHE-SQL that achieves privacy information retrieval against the above-mentioned attacks by using Fast Fully Homomorphic Encryption over the Torus \cite{cryptoeprint:2018/421} as the foundation for privacy-preserving database systems. TFHE allows arbitrary computations to be performed on encrypted data without decryption, providing the strongest possible privacy guarantees, short of information-theoretic security. We give a security analysis
of our system in Section \ref{sec:security}.

Early attempts\cite{gahi2015securedatabaseusinghomomorphic} to apply FHE to secure SQL databases faced practical challenges, as performing SQL-style queries on encrypted data proved difficult\cite{mani2013enablingsecuredatabaseservice}. Applying FHE to database systems remains a significant technical challenge. The reasons are as follows. First, homomorphic operations incur significant computational overhead, especially in executing complex queries that require multiple levels of circuit evaluation\cite{xue2025measuringcomputationaluniversalityfully}. Second, most current FHE schemes are optimized for basic arithmetic and boolean operations; when extending to more sophisticated computations, the execution time can grow exponentially. This limitation also affects query processing, as indexing of encrypted columns is infeasible, forcing the system to adopt an iterative scan approach. Finally, FHE ciphertexts are substantially larger than their plaintext counterparts—often ranging from hundreds of kilobytes to several megabytes per value.

Our work presents FHE-SQL, a database system that executes SQL queries directly over encrypted data without relying on property- or order-preserving schemes, thus avoiding the vulnerabilities inherent in those methods. This enables applications that need privacy guaranties, such as analytics on confidential patient records in a third-party cloud or querying sensitive financial data for fraud detection without exposing the raw information. By leveraging FHE, the system ensures strong privacy guarantees against untrusted servers, preserving the confidentiality of stored data, query contents, and query results. In contrast to property-preserving encryption–based solutions, FHE-SQL offers formal security proofs against inference attacks that exploit data and access patterns. The key contributions of this work are as follows and we provided implementation as a zip file in supplemental material.

\begin{itemize}
    \item \textbf{Oblivious Query Protocol:} We introduced a protocol for private data retrieval using homomorphic boolean mask generation. This technique allows the server to execute selection queries obliviously, ensuring it learns nothing about the query's filter conditions or the resulting dataset.
    \item \textbf{Performance Enhancement:} We integrate computation-aware query planning and batching strategies to reduce the overhead of homomorphic operations, achieving improved latency and throughput in practical workloads. By leveraging blob storage instead of naively storing all ciphertexts in RocksDB, we achieved up to 12× faster concurrent writes and 45× faster concurrent reads.
    \item \textbf{Security Analysis:} We provide formal security analysis proving that our system achieves computational privacy against adversaries, with privacy guarantees that degrade gracefully under various attack models within the universal composability (UC) framework.
    \item \textbf{Extensibility:} The system architecture is designed to support complex database schemas, multiple query types, and future integration with advanced FHE schemes.
\end{itemize}
\section{Preliminary}
\label{sec:preliminary}
This section introduces the concept and security model of Fully Homomorphic Encryption (FHE), outlines the main schemes used in practice, and explains how FHE can be leveraged to construct Private Information Retrieval (PIR) protocols.

\subsection{Private Information Retrieval}
\noindent A Private Information Retrieval (PIR) scheme enables a user to retrieve a specific data item from a database without revealing the identity of the item to the server holding the data. While this concept is foundational for building privacy-preserving databases, its practical implementation has been limited by the high computational cost to servers. The computational solution to PIR was proposed in 1997\cite{PIR}, and several schemes, such as SealPIR\cite{cryptoeprint:2017/1142} and DEPIR\cite{cryptoeprint:2022/1703}, came up to reduce the computational cost and communication time. A PIR scheme must satisfy two core requirements to be considered effective and secure:

\begin{itemize}
    \item \textbf{Correctness}: The user must be able to accurately reconstruct the requested message from the responses received from the server(s).
    \item \textbf{Privacy}: The index of the item a user is retrieving must remain confidential from the server(s). This privacy is often an information-theoretic guarantee, meaning it holds true even against a computationally powerful adversary who could analyze the queries.
\end{itemize}

\noindent In essence, our FHE-SQL achieves PIR by transforming database queries into homomorphic computations. This protects the privacy of the user's requests by obscuring the content of the query from the untrusted server.

\subsection{Fully Homomorphic Encryption}
\noindent The concept of Fully Homomorphic Encryption, first proposed by Rivest in 1978 and theoretically realized by Gentry\cite{GentryFHE}, enables arbitrary computations on encrypted data without requiring decryption. Modern FHE schemes are typically based on the Learning With Errors (LWE) problem or its ring variant (Ring-LWE). A fully homomorphic encryption scheme consists of algorithms ($\KeyGen$, $\Encrypt$, $\Decrypt$, $\Eval$) where:

\vspace{0.3em}
\begin{itemize}
    \item $\KeyGen(1^\lambda) \rightarrow (pk, sk, evk):$
          Generates public key, secret key, and evaluation key
    \item $\Encrypt(pk | sk, m) \rightarrow c:$
          Encrypts message m under public key $pk$ or secret key $sk$.
    \item $\Decrypt(sk, c) \rightarrow m:$
          Decrypts ciphertext $c$ using secret key $sk$
    \item $\Eval(evk, f, c_1,...,c_k) \rightarrow c':$
          Homomorphically evaluates function f on ciphertexts by under evaluation key
\end{itemize}

A scheme is semantically secure if no polynomial-time adversary can distinguish between encryptions of two chosen messages, and required to pass the modern security model IND-CPA-D\cite{IND-CPA-D} which is a security model aiming for homomorphic encryption. 

TFHE\cite{cryptoeprint:2018/421} and CKKS\cite{cryptoeprint:2016/421} are two primary FHE schemes that are widely recognized and supported by several mature libraries. TFHE enables the construction of universal logic gates and circuits and provides efficient implementations of logical operations. This scheme has been proven secure under the IND-CPA-D model\cite{TFHE_IND-CPA-D}. CKKS supports SIMD operations and is particularly efficient for machine learning applications. However, it is designed for approximate arithmetic on real numbers and has a limited circuit depth, which is not suitable for our work.

We adopt TFHE due to its efficiency for boolean operations, such as equality comparisons and the generation of boolean masks which are central to SQL query processing. Our implementation leverages the \texttt{TFHE-rs\cite{TFHE-rs}} library to achieve practical performance while maintaining strong security guarantees.

\subsection{FHE-Based PIR}
\noindent FHE provides a natural approach to PIR construction because of the homomorphic capability:

\begin{itemize}
    \item Client encrypts query index: $c_i = \Encrypt(pk, i)$
    \item Server computes: $c_{result} = \Eval(evk, \text{eq}, c_i, j) \cdot \text{DB}[j]$
    \item Client decrypts $c_{result}$ to recover $\text{DB}[i]$
    \item Where $\text{eq}(x,y)$ is a homomorphic equality predicate returning $1$ if $x=y$, $0$ otherwise.
\end{itemize}

\noindent Our FHE-SQL system extends this approach to support complex database schemas and multiple query types while optimizing for practical performance.
\begin{figure*}[!htb] 
  \centering
  \includegraphics*[width=\textwidth]{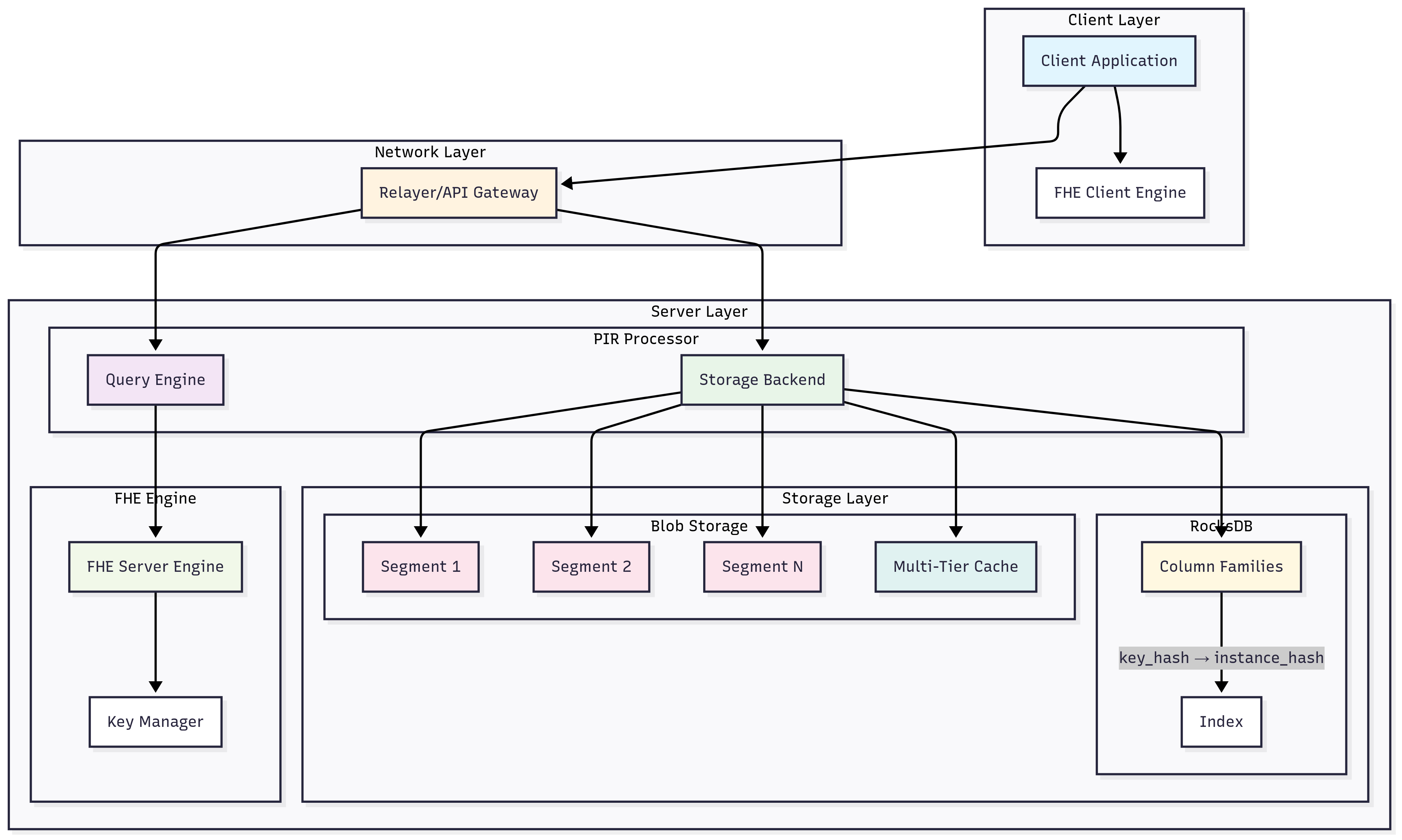}
  \caption{Overview of FHE-SQL system architecture.} 
  \label{fig:fhe-sql} 
\end{figure*}
\section{Related Works and Cryptanalytic Attacks}
\label{sec:related_works}

A wide range of approaches have been proposed to enable secure computation over encrypted databases, ranging from traditional property-preserving encryption techniques to recent systems based on fully homomorphic encryption (FHE). In this section, we review some works in both categories, highlighting their supported operations, performance characteristics, and security guarantees. Also, we summarize notable cryptanalytic attacks that have revealed fundamental vulnerabilities in non-FHE approaches, underscoring the motivation for adopting stronger cryptographic primitives in our design.

\subsection{Non-FHE Approaches}
\noindent Early practical encrypted database systems relied heavily on property-preserving encryption or trusted hardware to enable query processing without full decryption.
These methods typically trade security for functionality and performance, as shown in the following systems:
\begin{itemize}
    \item \textbf{CryptDB\cite{CryptDB}:} The first practical encrypted database system, using multiple encryption schemes (deterministic, order-preserving, searchable) to support different SQL operations. However, CryptDB has significant security limitations due to the information leaked by property-preserving encryption.
    \item \textbf{Always Encrypted\cite{AlwaysEncrypted}:} Commercial implementation providing deterministic and randomized encryption with application-layer decryption. Limits supported operations to equality queries for deterministic encryption.
    \item \textbf{TrustedDB\cite{TrustedDB}:} Uses a trusted hardware component to process sensitive operations while keeping most data encrypted in untrusted storage.
\end{itemize}

\subsection{FHE-Based Approaches}
\noindent Fully Homomorphic Encryption (FHE) eliminates the need for property-preserving encryption by enabling arbitrary computation over encrypted data. Although early FHE-based database systems were limited by performance constraints, they have progressively expanded supported operations and application domains. Representative systems include:

\begin{itemize}
    \item \textbf{Cipherbase\cite{Cipherbase}:} Early prototype using somewhat homomorphic encryption for aggregate queries. Limited to simple operations due to noise accumulation.
    \item \textbf{Seabed\cite{Seabed}:} Used FHE for specific analytics workloads but did not address general-purpose database operations or storage optimization.
    \item \textbf{HEDB\cite{HEDB}:} Implemented basic database operations using CKKS FHE but focused on numerical computations rather than general database queries.
    \item \textbf{Hermes\cite{Hermes}:} Integrated FHE-native vector query processing into a standard SQL engine. It leverages multi-slot FHE packing in OpenFHE to embed multiple records and auxiliary statistics per ciphertext.
\end{itemize}

\noindent Although these works demonstrate the feasibility of FHE for encrypted databases, most remain specialized to certain workloads or face performance bottlenecks in practical deployments.

\subsection{Cryptanalytic Attacks}
\noindent Beyond functional limitations, many non-FHE approaches have been shown to suffer from severe cryptanalytic vulnerabilities.
In particular, property-preserving encryption leaks structural information that can be exploited through the following attacks:
\begin{itemize}
    \item \textbf{Frequency Analysis Attacks\cite{FrequencyAttacks}:} Deterministic encryption reveals equality patterns, enabling frequency analysis attacks that can recover plaintext data when the plaintext distribution is known.
    \item \textbf{Order-Revealing Attacks\cite{OrderPreservingAttacks}:} Order-preserving encryption exposes the complete ordering of encrypted values, allowing reconstruction of approximate plaintext values.
    \item \textbf{Inference Attacks\cite{AccessPatternAttacks}:} Access pattern analysis can reveal significant information about query contents and results, even with strong encryption schemes.
\end{itemize}

\noindent These attacks demonstrate the fundamental insecurity of property-preserving approaches and motivate the need for stronger cryptographic techniques like FHE.
\section{FHE-SQL: System Design and Architecture}
\label{sec:fhe_sql}

In this section, we present an in-depth examination of our FHE-SQL system, which enables secure execution of SQL queries directly over encrypted data. This design supports expressive and complex queries without exposing plaintext to the server at any stage. An overview of the system architecture is shown in Figure~\ref{fig:fhe-sql}.
\begin{algorithm*}[ht]
    \caption{FHE-SQL Client-Side Query Execution}
    \label{alg:fhe_sql_client_2e}
    
    \SetKwInOut{KwIn}{Require}
    \SetKwInOut{KwOut}{Ensure}
    
    \KwIn{SQL Query $q$, FHE key-pair $(sk, pk)$, Signing key $sk_{sig}$}
    \KwOut{Plaintext result set $R_{plain}$}
    
    $Q \leftarrow \proc{EncryptQuery}(q)$\ \tcp*{Encrypt values in a SQL query}
    $Token \leftarrow \proc{CreateAuthToken}(\text{owner\_id, \dots}, sk_{sig}, \text{Nonce})$\ \tcp*{ Data owner creates a auth token} 
    
    Send $(Q, Token)$ to Server \& Receive encrypted result $C_{result}$. \tcp*{ Communication with Server}
    
    $R_{masked} \leftarrow \proc{Decrypt}(C_{result}, sk)$\ \tcp*{Decrypt the entire result set}
    $R_{plain} \leftarrow \proc{FilterZeroRows}(R_{masked})$\ \tcp*{Remove rows that evaluated to false}
    \Return{$R_{plain}$}\;
\end{algorithm*}
\begin{algorithm*}[t]
    \SetKwInOut{KwIn}{Require}
    \SetKwProg{Proc}{Procedure}{}{}
    \SetKwProg{Fn}{Function}{}{}
    \SetKwFunction{FnHET}{EvaluateHomomorphicTree}
    
    \caption{FHE-SQL Server-Side Query Execution}
    \label{alg:fhe_sql_server_2e}

    \KwIn{Encrypted Query: $Q$, Access Token: $Token$, Verification Key: $pk_{sig}$}

    \Proc{ProcessQuery($Q, Token$)}{
        \If{\textbf{not} \proc{VerifySignature}($Token, pk_{sig}$)}{
            \Return Error\;
        }
        $Q_{ast} \leftarrow \proc{ParseQueryIntoAst}(Q)$ \tcp*{Parse encrypted query $Q$ into AST}
        $DB_{meta} \leftarrow \proc{FetchMetadata}(Q_{ast}.\text{from})$ \tcp*{Get metadata from RocksDB}
        $DB \leftarrow \proc{FetchCiphertexts}(DB_{meta})$ \tcp*{Load encrypted data from blob storage}
        
        Let $M$ be an empty vector of ciphertexts \tcp*{Initialize boolean selection mask}
        \ForEach{encrypted row $r \in DB$}{
            $m \leftarrow \FnHET(Q_{ast}.\text{where}, r)$ \tcp*{Generate mask bit via FHE-PIR}
            Append $m$ to $M$\;
        }
        
        $C_{selected} \leftarrow \proc{SelectEncryptedColumns}(DB, Q_{ast}.\text{columns})$\;
        $C_{result} \leftarrow \proc{Homomorphic If-Then-Else}(C_{selected}, M)$ \tcp*{Apply mask to selected data}
        \Return $C_{result}$ to Client\;
    }
    \BlankLine
    \Fn{\FnHET{node, row}}{
        \uIf{node is Identifier}{
            \Return $row[\text{node.name}]$\;
        }
        \uElseIf{node is EncryptedLiteral}{
            \Return $\text{node.value}$\;
        }
        
        $L \leftarrow \FnHET(\text{node.left, row})$\;
        $R \leftarrow \FnHET(\text{node.right, row})$\;
        
        \Return \proc{ApplyHomomorphicOp}(\text{node.operator}, L, R)\;
    }
\end{algorithm*}

\subsection{System Architecture}
\subsubsection*{A. RocksDB and Blob Storage Integration}
\noindent FHE ciphertexts present unique storage challenges due to their size (100KB-10MB per value).
Traditional database storage engines are optimized for small to medium records (typically <100KB).
Therefore, we provides a hybrid architecture to combine the benefits of LSM-trees for metadata with specialized blob storage for FHE ciphertexts.

\subsubsection*{B. Challenges of LSM Trees with Large Values}
\noindent While Log-Structured Merge (LSM) trees, the core architecture of systems like RocksDB, are highly optimized for write-intensive workloads, their performance can degrade significantly when handling extremely large values. This degradation stems from several interconnected factors inherent to the LSM-tree design:

\begin{itemize}
    \item \textbf{Write Amplification:} The fundamental process of compaction in an LSM-tree requires that data be repeatedly read from one level and rewritten to the next. When values are large, the I/O cost of this process is magnified substantially. A single large key-value pair, moved through multiple compaction levels, can consume a disproportionate amount of disk bandwidth, leading to severe write amplification and increased strain on storage hardware.

    \item \textbf{Memory Pressure:} Large values exert significant pressure on in-memory resources. In the MemTable, they consume valuable space, reducing the number of individual keys that can be buffered before a flush to disk is required. More critically, in the block cache, a single large value can displace dozens or even hundreds of smaller entries. This "cache poisoning" reduces the overall effectiveness of the cache, as fewer hot keys can be retained in memory.

    \item \textbf{Read Performance Degradation:} The memory pressure caused by large values in FHE-SQL directly impacts read latency. With a less effective cache, the probability of a cache miss increases. Consequently, read operations are more likely to require fetching data from SSTables on disk---a far slower operation than reading from memory. This effect is particularly detrimental for read-intensive workloads, where low-latency access is paramount.
\end{itemize}

\subsubsection*{C. Blob Storage Systems}
To optimize the handling of large and unstructured objects, we implemented a storage system. The architecture is segment-based, where data are appended sequentially into large segments to maximize write throughput. For efficient data retrieval, we utilizes memory mapping, a technique that allows for random access to parts of a large object without copying the file into user space. To accommodate varied usage, it features tiered caching mechanisms, using multiple layers such as hot, warm, and cold caches to serve data according to different access patterns.

Our system integrates the two: RocksDb and blob storage, stores encrypted key-value pairs, where both the key and value are ciphertexts. To manage them, we generate a hash for each ciphertext. A Key hash maps a value hash. The key hashes are stored in RocksDB, and the actual FHE ciphertexts are stored in blob storage, This approach combines the high performance of Log-Structured Merge trees for managing metadata with the specialized capabilities of blob storage for the large FHE ciphertexts.

\subsection{FHE-based PIR Algorithms}
For key-value access from a database, a user creates an encrypted key which represents a query and sends it to the database with schema information attached.
The processor first filters the dataset by schema to ensure type compatibility and reduce computational overhead, then iterates through the dataset and performs homomorphic equality comparison with the encrypted key.
The system then generates a boolean mask $M_{enc}$, which is a vector of encrypted boolean values where each position corresponds to a database entry.
Entries that match the query criterion produce an encrypted representation of logical true, while non-matching entries produce an encrypted representation of logical false.
This boolean mask $M_{enc}$ serves as a cryptographic filter that can be applied homomorphically to the entire dataset.

After $M_{enc}$ is generated, the processor applies it to the entire dataset and returns the result to the user.
The data not matched will be encrypted zero after $M_{enc}$ is applied, otherwise it remains the same.
This approach ensures that every database entry is processed identically, preventing information leakage about query selectivity or matching patterns.
We cannot build an index for the encrypted data, since it reveals the data's comparability. See figure \ref{fig:fhe_kv_search} for the diagram.

\usetikzlibrary{
    matrix,
    positioning,
    arrows.meta,
    fit
}

\begin{figure*}[t]
    \centering
    \begin{tikzpicture}[
        node distance=12mm and 6mm,
        cell/.style={rectangle, draw, minimum width=2.2cm, minimum height=0.7cm, align=center},
        keycell/.style={cell, fill=blue!20},
        datacell/.style={cell, fill=green!20},
        querycell/.style={cell, fill=red!20, minimum width=3cm},
        maskcell/.style={cell, fill=yellow!20},
        index/.style={font=\scriptsize, inner sep=2pt},
        title/.style={font=\bfseries},
        arr/.style={->, >=Latex, thick, gray}
    ]
    
    \matrix (data) [
        matrix of nodes, row sep=1mm, column sep=2mm, nodes={keycell}
    ] {
        $\text{enc}(k_0)$ & $\text{enc}(k_1)$ & $\cdots$ & $\text{enc}(k_i)$ & $\cdots$ & $\text{enc}(k_n)$ \\
        |[datacell]| $\text{enc}(d_0)$ & |[datacell]| $\text{enc}(d_1)$ & |[datacell]| $\cdots$ & |[datacell]| $\text{enc}(d_i)$ & |[datacell]| $\cdots$ & |[datacell]| $\text{enc}(d_n)$ \\
    };
    \node[title, above=8mm of data] {Encrypted Data Array (Key-Value Pairs)};
    
    \node (query) [querycell, below=of data-2-4] {$\text{enc}(k_i)$};
    \node[left=2mm of query, font=\bfseries] {(Query)};
    
    \matrix (mask) [
        matrix of nodes, nodes={maskcell}, column sep=2mm, below=of query
    ] {
        $\text{enc}(\text{false})$ & $\text{enc}(\text{false})$ & $\cdots$ & $\text{enc}(\text{true})$ & $\cdots$ & $\text{enc}(\text{false})$ \\
    };
    \node[title, below=5mm of mask] {Resulting Encrypted Boolean Mask};
    
    \draw[arr, black, very thick] (query.north) -- (data-2-4.south);
    \node[align=left, font=\small, right=5mm of query] {
        HE Test:
        $\forall j \in [0,n]$,\\
        $\text{HE.eq}(\text{enc}(k_j), \text{query})$
    };
    \foreach \j in {1,2,4,6} {
        \draw[arr] (data-2-\j.south) -- (mask-1-\j.north);
    }
    \node[draw, red, dashed, inner sep=3pt, fit=(data-1-4) (data-2-4)] {};
    \node[draw, red, dashed, inner sep=3pt, fit=(mask-1-4)] {};
    
    
    \end{tikzpicture}
    \caption{Diagram of a homomorphic key-value search, generating an encrypted selection mask.}
    \label{fig:fhe_kv_search}
\end{figure*}
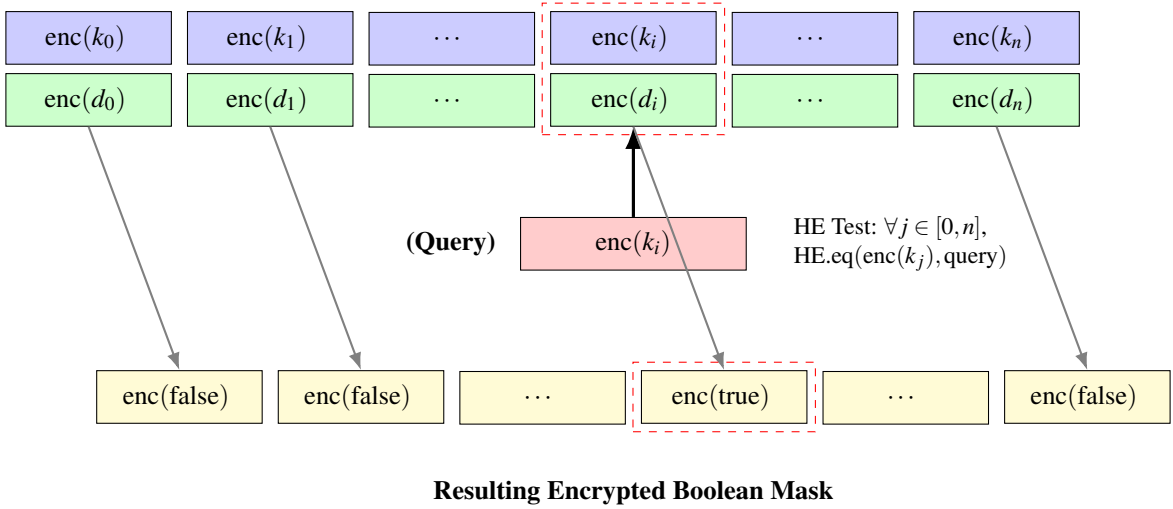

\subsubsection*{A. Data Access Control}
Our system enables data access control, ensuring only authorized users can access data. Each data owner has a unique keypair $(sk_{owner}, pk_{owner})$, with private keys staying exclusively on client devices. When storing data, we encapsulate it in an authenticated structure containing the encrypted payload, owner identifier, timestamp, and access metadata. This creates verifiable data ownership without requiring server access to private keys.

Data owners can delegate fine-grained permissions "Read", "Write", "Delete", and "Delegate" to other users through cryptographically signed tokens. These tokens contain authorization data: $\text{Delegation} = \{\text{owner\_id}$, $\text{user\_id}$, $\text{permissions}$, $\text{expires\_at}$, $\sigma_{owner}$, $Nonce\}$, where $\sigma_{owner}$ is the owner's digital signature. All database operations require cryptographically signed authentication tokens. The server verifies token authenticity through signature validation, ensuring only legitimate token holders can perform authorized operations. This zero-trust architecture prevents server access to encrypted data without proper authorization. We can substitute the signature scheme with a post-quantum one.

\subsubsection*{B. Query Processing}

The initial step in query execution involves parsing the input SQL statement into an Abstract Syntax Tree (AST). This tree serves as a structured, intermediate representation of the query, with distinct nodes corresponding to high-level clauses such as \texttt{SELECT}, \texttt{FROM}, and \texttt{WHERE}.

A critical phase of this process is the transformation of logical and comparison expressions, primarily within the \texttt{WHERE} clause, into a structure amenable to cryptographic evaluation. We term this structure a \textbf{Homomorphic Expression Tree}. In this tree, the leaf nodes represent operands---such as table identifiers and literal values---while the internal nodes represent operators that correspond to executable homomorphic operations. This architecture allows any binary expression supported by the underlying FHE library to be seamlessly integrated into the query plan. Supported homomorphic operations include not only but:

\begin{itemize}
    \item \textbf{Equality:} $a = b$
    \item \textbf{Comparison:} $a < b$ and $a \leq b$
    \item \textbf{Selection:} $\max(a, b)$ and $\min(a, b)$
\end{itemize}

\noindent To illustrate this transformation, consider the following SQL query, which filters records based on age and salary:

\noindent\begin{verbatim}[breaklines, fontsize=\small]{sql}
  SELECT * FROM table WHERE age > 18 AND salary < 1000;
\end{verbatim}

The parser converts this statement into a structured AST, represented below in JSON format. This AST precisely captures the logical hierarchy of the operations. The root of the \texttt{where} clause is a \texttt{BinaryExpression} with the operator \texttt{AND}, and its children are the two \texttt{ComparisonExpression} nodes.

\begin{verbatim}[breaklines, fontsize=\small]{json}
{
  "type": "SelectStatement",
  "columns": ["*"],
  "from": {"type": "Table", "name": "table"},
  "where": {
    "type": "BinaryExpression",
    "operator": "AND",
    "left": {
      "type": "ComparisonExpression",
      "operator": ">",
      "left": {"type": "Identifier", "name": "age"},
      "right": {"type": "Number", "value": 18}
    },
    "right": {
      "type": "ComparisonExpression",
      "operator": "<",
      "left": {"type": "Identifier", "name": "salary"},
      "right": {"type": "Number", "value": 1000}
    }
  }
}
\end{verbatim}

A key principle of this design is that operands within the tree (i.e., identifiers and numerical values) can be represented as either plaintext or FHE ciphertexts. However, the operators remain as plaintext metadata, which directs the server to apply the correct homomorphic functions during the evaluation phase.

Check algorithm \ref{alg:fhe_sql_client_2e} and algorithm \ref{alg:fhe_sql_server_2e} below for the client and server side algorithm details.

\section{Proof of Security for The PIR Property}
\label{sec:security}

We prove the security of the PIR property in our FHE-SQL protocol, denoted \prot{FHE-SQL}, within the Universal Composability (UC) framework \cite{UC}. The goal is to show that our real-world protocol emulates an ideal-world scenario where a trusted third party handles the query execution. This is achieved by proving that \prot{FHE-SQL} UC-realizes an ideal functionality, \func{FHE-SQL}, meaning that for any real-world adversary \adv, there exists a simulator \simul\ such that no environment \env\ can distinguish between an interaction with the real protocol and \adv, and an interaction with the \func{FHE-SQL} and \simul. We discuss only the single-server version of our protocol. The security of this may be compromised depending on the number of entries in the sharded database.

\begin{algorithm*}[t]
\caption{The Simulator $\simul$}
\label{alg:simulator}
\SetKwInOut{KwIn}{Input}
\SetKwInOut{KwOut}{Output}
\SetKwProg{Fn}{Procedure}{}{}
\SetKwFunction{Simulate}{Simulate}

\Fn{\Simulate{$pk, sk_{sig}$}}{
    Receive the leakage tuple $\leak = (Q_{redacted}, \text{Token}, DB)$ from \func{FHE-SQL}\;

    \tcp{Simulate Client Request:}
    Construct a fake encrypted query $Q'$ from $Q_{redacted}$ by replacing every placeholder '$\bot$' with $\text{Encrypt}_{pk}(0)$\;
    Provide the pair $(Q', \text{Token})$ to the internal adversary \adv, simulating the message sent from the client\;

    \tcp{Simulate Server Execution (in response to \adv's inspection):}
    Generate a fake encrypted database $DB'_{enc}$ of the size of $entries$, filled with $\text{Encrypt}_{pk}(0)$\;
    Generate a fake mask $M_{enc}'$ of the size of $entries$, filled with random-looking valid encryptions of zero\;
    Generate a fake result $C'_{result}$ with the correct dimensions, filled with random-looking valid encryptions\;

    Forward any messages from the internal \adv\ to the environment \env\ and vice-versa\;
}
\end{algorithm*}

\subsection{The Ideal Functionality: \func{FHE-SQL}}
\noindent The ideal functionality  represents a perfect, incorruptible service for SQL queries. It interacts with a user and a server that is controlled by the adversary \adv and precisely defines the information that is allowed to be leaked. \func{FHE-SQL} is initialized with a encrypted database $DB$ by the data owners. \\

\noindent\textbf{Initialize}: Upon receiving a message $(Init,DB')$ from a set of data owner parties: Set its internal database ${DB := DB'}$.

\noindent\textbf{Query}: Upon receiving a message $(Q, Token)$ from a user party:
\begin{enumerate}
    \item \textbf{Define Leakage}: Construct a leakage tuple $\leak(Q, DB, Token)$ containing:
          \begin{itemize}
              \item The query's Abstract Syntax Tree (AST), $Q_{redacted}$, with all literal values replaced by a placeholder symbol '$\bot$'. This reveals query structure, selected columns, and the target table.
              \item A valid authentication token for ${owner\_id}$.
              \item The total number of rows in the queried table.
          \end{itemize}
    \item \textbf{Send Leakage}: Send \leak\ to the adversary \adv.
    \item \textbf{Execute Query}: Execute the encrypted SQL query $q$ in the $DB$ to get the encrypted result set $C_{result}$.
    \item \textbf{Send Result}: Send $C_{result}$ to the user.
\end{enumerate}

\subsection{The Real World Protocol and Adversary}
In the real world, there is no trusted party. The user and server execute the \prot{FHE-SQL} protocol as defined in Algorithms \ref{alg:fhe_sql_client_2e} and \ref{alg:fhe_sql_server_2e}. The adversary \adv\ is assumed to be a static, malicious entity that corrupts the server. This gives \adv\ access to all of the server's internal states and computations, including the received encrypted query $Q$, the token, the full encrypted database, and all intermediate homomorphic calculations like $M_{enc}$ and the final encrypted result $C_{result}$.

\subsection{The Simulator \simul}
To prove security, we must construct a simulator \simul\ that operates in the ideal world but can generate a transcript that is computationally indistinguishable from a real-world execution. The simulator \simul\ internally runs a copy of the adversary \adv\ and uses the leakage \leak\ from \func{FHE-SQL} to simulate \adv.

\subsection{Indistinguishability Argument}

\newtheorem{theorem}{Theorem}
\begin{theorem}
    The protocol \prot{FHE-SQL} UC-realizes the ideal functionality \func{FHE-SQL} in the presence of a static, malicious adversary corrupting the server, assuming the underlying FHE and signature schemes are secure.
\end{theorem}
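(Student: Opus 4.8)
The plan is a standard UC simulation argument. Fix an arbitrary PPT environment \env\ and the static malicious adversary \adv\ that corrupts the server; we use the simulator \simul\ of Algorithm~\ref{alg:simulator} and show that \env's view in the real execution of \prot{FHE-SQL} with \adv\ is computationally indistinguishable from its view in the ideal execution with \func{FHE-SQL} and \simul. The crucial structural observation is that, since only the server is corrupted and the honest client never transmits $sk$, every value \adv\ (hence \env) ever observes is either a fixed-format plaintext (the redacted AST, operator tags, column names, target table), an authentication token, or a ciphertext object produced under $pk$: the encrypted query $Q$, the loaded encrypted database $DB_{enc}$, the homomorphically computed mask $M_{enc}$, and the encrypted result $C_{result}$. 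The simulator reproduces the plaintext parts verbatim from the leakage $\leak=(Q_{redacted},\text{Token},DB)$ and replaces every ciphertext by a freshly sampled encryption of $0$ of the appropriate shape, exactly as in Algorithm~\ref{alg:simulator}.

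Indistinguishability is then established through a short hybrid sequence, each step costing only a negligible amount:
\begin{itemize}
    \item $H_0$: the real execution of \prot{FHE-SQL} with \adv.
    \item $H_1$: replace every ciphertext in $DB_{enc}$ by $\Encrypt_{pk}(0)$; $H_0\approx H_1$ by the IND-CPA-D security of TFHE, via a reduction that embeds the challenge ciphertexts into the database and answers \env\ using $pk$ only (the client's decryptions in Algorithm~\ref{alg:fhe_sql_client_2e} are what force IND-CPA-D rather than plain IND-CPA).
    \item $H_2$: replace the literal ciphertexts inside $Q$ by $\Encrypt_{pk}(0)$, i.e.\ build $Q'$ from $Q_{redacted}$ as \simul\ does; $H_1\approx H_2$ by IND-CPA-D again.
    \item $H_3$: replace the homomorphically derived $M_{enc}$ and $C_{result}$ by fresh random-looking encryptions of $0$ of the correct dimensions; $H_2\approx H_3$ from the ciphertext-sanitization / circuit-privacy property of the TFHE gate-bootstrapping pipeline together with IND-CPA-D, since after bootstrapping the output noise distribution is independent of the evaluated circuit and of the input noise.
    \item $H_4\equiv$ the ideal execution: every value handed to \adv\ is now generated exactly as in Algorithm~\ref{alg:simulator}, and the client's output is supplied by \func{FHE-SQL}, which runs the same query-evaluation procedure the honest server would on the true $DB$.
\end{itemize}
Summing the distinguishing advantages over the constant number of hybrids gives the theorem. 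The authentication token is handled orthogonally: \simul\ holds $sk_{sig}$ (equivalently, receives a valid $\text{Token}$ in $\leak$), so it produces a token with the exact real distribution; existential unforgeability of the signature scheme is what ensures that in the real world a query is processed only when it carries such a legitimately generated token, so the two worlds contain the same set of query transcripts.

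\textbf{Main obstacle.}
The delicate step is $H_2\approx H_3$. Semantic security alone says a ciphertext hides its plaintext, but it does \emph{not} say that $\Eval$ applied to the real $DB_{enc}$ and $Q$ yields something a distinguisher with access to the whole transcript cannot tell from an \emph{independent} fresh encryption: in the real world $M_{enc}$ and $C_{result}$ are deterministic functions of inputs \env\ partly chose, so their residual randomness could in principle be correlated with the hidden query. Closing this gap requires a sanitization/circuit-privacy argument for TFHE --- that gate bootstrapping re-randomizes each output so its distribution depends only on the output plaintext --- and, where such a statement is not available directly from the library, the cleanest fix is to add an explicit re-randomization (noise-flooding or a bootstrap-then-reencrypt) step before $C_{result}$ is returned. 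A secondary subtlety is that \func{FHE-SQL} as stated delivers the correct $C_{result}$ unconditionally whereas a malicious server could deviate; since the statement concerns the PIR/confidentiality property against the server this is benign, but to make the realization literal one would extend \func{FHE-SQL} with an adversarial ``deliver/abort'' interface that \simul\ can invoke.
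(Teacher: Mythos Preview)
Your proposal is correct and follows the same overall strategy as the paper: instantiate the simulator of Algorithm~\ref{alg:simulator} and argue that the adversary's view (encrypted query, encrypted database, intermediate homomorphic values, token) is indistinguishable from the simulated one via FHE security, while the honest user's output is identical in both worlds. The paper's own argument is only a two-paragraph sketch: it does not write out hybrids and simply asserts that ``$Q$ is indistinguishable from $Q'$, and all subsequent encrypted states that depend on it are also indistinguishable from their simulated counterparts'' by semantic security.

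Your hybrid sequence $H_0$--$H_4$ is a strictly more careful execution of the same idea, and your ``main obstacle'' paragraph identifies a point the paper glosses over: because Algorithm~\ref{alg:simulator} has \simul\ output \emph{fresh} encryptions for $M'_{enc}$ and $C'_{result}$ rather than the result of running $\Eval$ on the fake inputs, the step $H_2\approx H_3$ really does need a circuit-privacy/sanitization property of the bootstrapped TFHE pipeline, not IND-CPA alone. The paper's sketch does not isolate this; an alternative that would let one stay with pure semantic security is to have the simulator actually run the homomorphic evaluation on $Q'$ and $DB'_{enc}$, but that is not what Algorithm~\ref{alg:simulator} specifies. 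Your remark about the missing deliver/abort interface for a malicious server is likewise a valid refinement absent from the paper.
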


\begin{proof}[Proof Sketch]
    We argue that no environment \env\ can distinguish the real and ideal worlds.

    \begin{enumerate}
        \item \textbf{The Adversary's View:} In the real world, \adv\ sees the true encrypted query $Q$ and the true encrypted intermediate states. In the ideal world, \adv\ (running inside \simul) sees the simulated query $Q'$ and simulated states. Due to the semantic security of the FHE scheme, an encryption of a real literal value is computationally indistinguishable from an encryption of zero. Therefore, $Q$ is indistinguishable from $Q'$, and all subsequent encrypted states that depend on it are also indistinguishable from their simulated counterparts. The token is identical in both worlds, and \adv\ and \simul are not able to decrypt any value. Thus, the adversary's entire view in the real world is computationally indistinguishable from its view in the ideal world.

        \item \textbf{The User's View:} In the real world, the user receives $C_{result}$, decrypts it, and filters the zero-rows to obtain the correct plaintext result $R_{plain}$. In the ideal world, the user receives $C_{ideal}$ from \func{FHE-SQL}, which is constructed to also yield $R_{plain}$ after decryption and filtering. The user's final output is therefore identical in both worlds.
    \end{enumerate}

    \noindent Since the adversary's view is computationally indistinguishable and the user's output is identical, the environment \env\ cannot distinguish between the real world and the simulated world except with negligible probability. This concludes the proof sketch.
\end{proof}

\section{Performance Evaluation}
\label{sec:performance}

We conducted performance benchmarks across multiple dimensions to assess the viability of FHE-SQL. Our evaluation focuses on two primary aspects: (i) the performance of the storage layer, including both cached and non-cached data as well as the efficiency of concurrent access, and (ii) the computational overhead of the homomorphic operations in our protocol when accelerated with GPUs. Finally, we provided an estimate of the query execution time as a function of the number of entries in the database. Table \ref{tab:exp_setup} shows our experiment setup.
\begin{table}[H]
    \centering
    \begin{tabular}{ll}
        \toprule
        \textbf{Component} & \textbf{Specification} \\
        \midrule
        CPU                & AMD Ryzen 9 7950X3D 16-Core \\
        ISA                & x86-64 \\
        Memory             & DDR5 128GB \\
        Storage            & Samsung 980 Pro NVMe SSD 2TB \\
        GPU                & NVIDIA RTX 4080 Super 16GB \\
        \midrule
        OS                 & Ubuntu 24.04.2 LTS \\
        Rust               & rustc 1.87.0 (17067e9ac 2025-05-09) \\
        Rust-RocksDB       & 0.23.0 \\
        TFHE-rs            & 0.11.0 \\
        \bottomrule
    \end{tabular}
    \caption{Experimental Setup}
    \label{tab:exp_setup}
\end{table}

\subsection{Benchmarking Methodology}

Since the size of FHE ciphertexts often reaches hundreds of KB, we designed a blob storage system for larger data storage.  To evaluate the performance characteristics of our blob storage and RocksDB-based database under realistic conditions, we designed a comprehensive benchmarking framework using the criterion crate in Rust.

\subsubsection*{A. Concurrent Access and Workload Simulation}

The benchmark framework utilizes a multithreading approach to simulate concurrent user activity. For read-intensive scenarios, a pool of 64 worker threads was used, while write-intensive scenarios used 16 concurrent threads. To model real-world access patterns where certain keys are significantly "hotter" than others, the workload simulation employed a \textbf{Skew Normal distribution} ($\xi=-1.0, \omega=10.0, \alpha=30.0$). This generates a realistic, non-uniform sequence of key accesses. Object sizes for testing were categorized into three distinct sizes to reflect typical usage: small (64KB), medium (256KB), and large (1MB), with constants defined for each to ensure consistent testing across benchmarks.

\subsubsection*{B. Benchmark Scenarios}

\noindent The evaluation was divided into several key benchmark scenarios:

\begin{itemize}
    \item \textbf{Write Throughput}: This test measured the performance of sequential writes for small, medium, and large data sizes to establish a baseline for ingestion performance.
    \item \textbf{Read Latency}: Both cold-read (cache cleared) and hot-read (cache primed) latencies were measured for all data sizes. For cold-read tests, a cache-clearing function was invoked to ensure that the measurements reflected disk I/O performance rather than in-memory cache hits.
    \item \textbf{Batch Operations}: The efficiency of batch-writing 100 records at a time was tested for both small and large data sizes to measure the performance gains from reducing transaction overhead.
    \item \textbf{Access Simulation}: This benchmark simulated a high-concurrency environment with 64 threads performing reads based on the Skew Normal access pattern. The system was pre-populated with 10,000 items to ensure the caches and storage structures were in a realistic state before the measurement began.
\end{itemize}

\noindent While the current framework focuses on these core metrics, it is designed to be extensible for future tests, including more complex environmental simulations like network jitter and packet loss.

\begin{table*}[htbp]
    \centering
    \begin{tabular}{|l|c|c|c|c|}
        \hline
        \textbf{Operation}      & \textbf{Entry Size} & \textbf{Blob Storage} & \textbf{RocksDB} & \textbf{Blob Improvement} \\
        \hline
        \multicolumn{5}{|c|}{\textbf{Write Operations}} \\
        \hline
        Sequential Write Small  & 64KB                & 46.323 µs             & 120.85 µs        & 2.6x \\
        Sequential Write Medium & 256KB               & 184.46 µs             & 472.17 µs        & 2.6x \\
        Sequential Write Large  & 1MB                 & 739.58 µs             & 1.747 ms         & 2.4x \\
        Batch Write (100)       & 64KB                & 5.2498 ms             & 12.260 ms        & 2.3x \\
        \hline
        \multicolumn{5}{|c|}{\textbf{Read Operations}} \\
        \hline
        Cold Read Small         & 64KB                & 11.664 µs             & 5.5070 µs        & N/A \\
        Hot Read Small          & 64KB                & 1.1086 µs             & 2.4558 µs        & 2.2x \\
        Cold Read Medium        & 256KB               & 45.579 µs             & 18.709 µs        & N/A \\
        Hot Read Medium         & 256KB               & 4.4549 µs             & 7.3069 µs        & 1.6x \\
        Cold Read Large         & 1MB                 & 181.55 µs             & 73.686 µs        & N/A \\
        Hot Read Large          & 1MB                 & 15.796 µs             & 30.066 µs        & 1.9x \\
        Batch Read (100)        & 64KB                & 157.19 µs             & 321.14 µs        & 2.0x \\
        \hline
        \multicolumn{5}{|c|}{\textbf{Concurrent Operations}} \\
        \hline
        Concurrent Writes       & 64KB                & 131.33 µs             & 1.6872 ms        & 12.8x \\
        Concurrent Reads        & 64KB                & 39.372 µs             & 1.7871 ms        & 45.4x \\
        \hline
    \end{tabular}
    \caption{Performance Comparison: Blob Storage vs RocksDB}
    \label{tab:blob_rocksdb_comparison}
\end{table*}

\subsubsection*{A. Write Performance}
For single-threaded write operations, the blob storage solution consistently outperformed RocksDB across all object sizes. It was two times more faster for small sequential writes (46.3\,\textmu s vs. 120.8\,\textmu s) and batch writes (5.2\,ms vs. 12.2\,ms). This suggests that for simple data ingestion or bulk loading tasks without concurrency, blob storage offers superior throughput. In the scenario of concurrent write loads, blob storage can be 12x faster than RocksDB (131.33 \,\textmu s vs. 1.6872\,ms), which is a impressing result.

\subsubsection*{B. Read Performance}
\noindent Read performance highlights the most critical trade-offs, particularly regarding caching and data access patterns. In our benchmark, we fix the cache size of both system.

\begin{itemize}
    \item \textbf{Hot Reads (Cached Data):} Our hypothesis is strongly supported here. For data that is already in cache (``hot reads''), blob storage is consistently faster, showing roughly 2x better latency than RocksDB across all object sizes. This demonstrates a highly effective and efficient caching strategy within the blob storage. Under concurrent read pressure, this advantage holds, with blob storage being significantly faster (39.372\,\textmu s vs. 1.78\,ms).

    \item \textbf{Cold Reads (Uncached Data):} The performance profile inverts when the data are uncached. RocksDB is around 2x faster than blob storage. For a small 64KB object, RocksDB's cold read latency was 5.5070 \,\textmu s, while blob storage took 11.664 \,\textmu s. This suggests that a cache miss in the blob storage system likely incurs a slight penalty.
\end{itemize}

\subsubsection*{C. Benchmarks Observation}
The two approaches are optimized for different access patterns. If we use larger cache and memory, blob will outperform RocksDb in practice. Blob Storage is the ideal choice for read-intensive workloads with high data locality. RocksDB has low cold-read latency that makes it the clear winner for applications where performance must remain predictable and fast, even for randomly accessed, non-cached data.

The optimal storage strategy depends on the expected I/O patterns. Blob storage excels at serving cached or frequently accessed data, while RocksDB provides superior performance for low-latency random reads. FHE-SQL features an adaptive architecture that can leverage both. For mixed workloads with high read-write concurrency, it defaults to a hybrid model—storing large ciphertexts in blob storage and their metadata in RocksDB—thereby capitalizing on the latter's extremely low latency (typically ~1 µs) for rapid metadata lookups.

\subsubsection*{D. FHE Operation Impact on System Performance}
\noindent To perform a naive key-value lookup, a sequence of homomorphic operations is executed. This protocol assumes the database contains $n$ key-value pairs, stored in the form of $(\text{enc}(\text{u32}), \text{enc}(\text{u32}))$. The computational cost of a single query is composed of the following four stages:

\begin{enumerate}
    \item \textbf{Client-Side Query Encryption:} The client encrypts the query key $k_q$ to produce a single ciphertext, $\text{enc}(k_q)$. This constitutes one FHE encryption operation.
    \item \textbf{Server-Side Homomorphic Comparison:} The server iterates through all $n$ encrypted keys in the target column family. For each key $\text{enc}(k_i)$, it performs a homomorphic equality comparison against the encrypted query key.
          \begin{equation}
              \text{mask}_i = \text{HE.eq}(\text{enc}(k_i), \text{enc}(k_q)), \quad \forall i \in [1, n]
          \end{equation}
          This stage requires $n$ homomorphic equality operations and results in an encrypted boolean mask which is a vector of length $n$.
    \item \textbf{Server-Side Homomorphic Selection:} The server uses the boolean mask to isolate the target key-value pair. For each entry $i$, a homomorphic multiplexer (an \texttt{if-then-else} operation) selects either the stored ciphertext if $\text{mask}_i$ is true, or an encrypted zero otherwise. This is applied to both the key and value columns and requires $2n$ homomorphic multiplexer operations.
          \begin{align}
              \text{selected\_k}_i & = \text{HE.cmux}(\text{mask}_i, \text{enc}(k_i), \text{enc}(0)) \\
              \text{selected\_v}_i & = \text{HE.cmux}(\text{mask}_i, \text{enc}(v_i), \text{enc}(0)) \\
          \end{align}
          The resulting sparse vectors are sent back to client for decryption or can be then aggregated into a single key-value pair via homomorphic addition if the entry is unique. (But it requires extra homomorphic additions.)
          \begin{align}
              \text{result}_k & = \sum_{i=1}^{n} \text{selected\_k}_i \\
              \text{result}_v & = \sum_{i=1}^{n} \text{selected\_v}_i
          \end{align}
    \item \textbf{Client-Side Result Decryption:} The server either returns a sparse vector of encrypted values or the two final ciphertexts. Here, we let the server return a sparse vector of length n. The client performs decryption and extracts the values in plaintext. This requires $2n$ decryption operations.
\end{enumerate}

\begin{table}[ht]
\centering
\label{tab:fhe_op_performance}
\begin{tabular}{|l|c|}
\hline
\textbf{Operations} & \textbf{Median Latency} \\
\hline
\multicolumn{2}{|c|}{\textbf{Encryption \& Decryption}} \\
\hline
Encryption (u8) & 376.56 µs \\
Encryption (u32) & 1.4771 ms \\
Trivial Encryption (u8) &  585.88 ns \\
Trivial Encryption (u32) & 2.44 µs \\
Decryption (u8) & 2.1153 µs \\
Decryption (u32) & 8.4904 µs \\
\hline
\multicolumn{2}{|c|}{\textbf{Arithmetic Operations}} \\
\hline
Add (u8) & 32.725 ms \\
Add (u32) & 79.594 ms \\
\hline
\multicolumn{2}{|c|}{\textbf{Logical \& Comparison Operations}} \\
\hline
AND (u8) & 11.30 ms \\
AND (u32) & 23.92 ms \\
Equality (u8) & 17.62 ms \\
Equality (u32) & 41.91 ms \\
\hline
\multicolumn{2}{|c|}{\textbf{Control Flow}} \\
\hline
If-Then-Else (u8, trivial u8) & 22.37 ms \\
If-Then-Else (u32, trivial u32) & 48.91 ms \\
If-Then-Else (u32, u32) & 90.07 ms \\
\hline
\end{tabular}
\caption{Performance of Homomorphic Operations}
\end{table}
The costs of the naive key-value retrieval protocol are divided between the client and the server, with the server bearing the vast majority of the computational load. The total time is a function of $n$, the number of key-value pairs in the database. We still let the server return a sparse vector of length n, and all numbers are in u32. See Table 3 for the exact operation time. Even though we have tried to optimized the perform of data accessing, the real bottleneck of the entire system is the FHE computation overhead.
\label{sec:naive_retrieval}

\paragraph{Client-Side Computational Cost:}
The client's workload is constant and independent of the database size. It consists of one encryption and $2n$ decryptions.
The total client-side time is calculated as follows:
\begin{align*}
    T_{client} & = 1 \times (1.4771 \,\text{ms}) + 2n \times (8.4904 \,\mu\text{s}) \\
               & \approx (1.4771+0.017\cdot n) \,\text{ms}
\end{align*}

\paragraph{Server-Side Computational Cost:}
The server's workload scales linearly with the number of entries, $n$. It is dominated by two homomorphic computations:
\begin{itemize}
    \item $n$ homomorphic equality tests (\texttt{eq\_u32})
    \item $2n$ homomorphic multiplexers (\texttt{if\_then\_else})
\end{itemize}

Using the median latencies from the benchmarks presented in Table 3, and we ignore trivial encryption. The total server-side time is calculated as follows:
\begin{align*}
    T_{server} & = n \cdot T_{eq\_u32} + 2n \cdot T_{if\_then\_else}             \\
               & = n \times (41.91\,\text{ms}) + 2n \times (48.91\,\text{ms})    \\
               & = 41.91\cdot n + 97.82\cdot n \approx 139.73\cdot n \,\text{ms} \\
\end{align*}
For any non-trivial database size, the server-side latency is approximately \textbf{139.73 milliseconds per entry}. We can leverage database sharding to parallelize the workload of data sql selection. When we execute an SQL query, the latency can increase severalfold, depending on the commands used.

\section{Discussion}
\label{sec:discussion}

The development and evaluation of FHE-SQL have yielded valuable insights into the practical application of fully homomorphic encryption to database systems. Our work successfully demonstrates the feasibility of building a privacy-preserving SQL interface, but it also illuminates the significant challenges and trade-offs inherent in this approach. This section discusses the primary performance bottlenecks, the fundamental limitations of the current design, and promising directions for future research.

\subsection{System Limitations}

\subsubsection*{Primacy of Homomorphic Computation}
The performance analysis in Section~\ref{sec:performance} unequivocally demonstrates that the computational overhead of homomorphic operations is the dominant bottleneck in the FHE-SQL system. With individual server-side FHE operations requiring tens of milliseconds, the microsecond-level latency of the underlying storage access---even for cold reads from our optimized blob storage---becomes practically negligible in the overall query execution time. As calculated, a naive query costs approximately $139.73 \cdot n$ milliseconds, a cost entirely dictated by the cryptographic computations.

This disparity implies that for the foreseeable future, optimizations at the storage layer will yield diminishing returns. Efforts to further reduce read/write latency, while beneficial in traditional database systems, are not a priority for FHE-SQL. The critical path to performance improvement lies in accelerating the homomorphic computations themselves, either through algorithmic advances in FHE schemes, hardware acceleration, or architectural optimizations at the query processing layer. Until the execution time of FHE operations approaches the same order of magnitude as storage I/O, the latter can be largely ignored in performance modeling.

\subsubsection*{The Inevitability of Linear Scans}
A core limitation stemming directly from the use of semantic-level encryption is the inability to build traditional database indexes. Structures like B-Trees rely on plaintext comparisons to efficiently navigate and prune the search space. Since all data in FHE-SQL remains encrypted, the server cannot perform these comparisons. Any data structure that reveals ordering or equality information would violate the privacy guarantees that FHE is designed to provide.

Consequently, FHE-SQL is constrained to a "full scan" or linear iteration approach for query processing. To find a matching record, the server must homomorphically compare the encrypted query predicate against every entry in the target column family. This results in a computational complexity of $O(n)$ for a simple key-value lookup, which poses a significant barrier to scalability for large datasets. While our system provides strong privacy, it trades the sub-linear query time of conventional databases for a protocol that is computationally intensive but reveals no information about which records are being accessed.

\section{Conclusion}
\label{sec:conclusion}
We presented FHE-SQL, a system that addresses security flaws in property-preserving encryption by enabling SQL queries directly on encrypted data with fully homomorphic encryption. We presented an architecture featuring an oblivious query protocol and a hybrid storage system to manage large ciphertexts, with security proven under the Universal Composability framework.

Our benchmarks confirmed that the primary factor influencing performance is the computational overhead of FHE. Consequently, future work can most effectively enhance scalability by focusing on horizontal scaling via database sharding, parallel query execution, and the design of FHE-aware optimizers.

In conclusion, FHE-SQL provides a blueprint for the future of truly private database systems. This work establishes a clear and promising path toward building the next generation of secure cloud services, paving the way for the continued maturation and widespread adoption of FHE technology.

\bibliographystyle{plain}
\bibliography{reference.bib}

\end{document}